\providecommand{\algorithmname}{Algorithm}
\theoremstyle{remark}
\newtheorem{theorem}{Theorem}
\newtheorem{corollary}{Corollary}
\newtheorem{lemma}{Lemma}
\newtheorem{definition}{Definition}
\newtheorem{remark}{Remark}
\newtheorem{note}{Note}
\theoremstyle{remark}
\newtheorem{example}{Example}
\title{A Relation Between Weight Enumerating Function and Number of Full Rank Sub-matrices}
\begin{document}

\author{Mahesh~Babu~Vaddi~and~B.~Sundar~Rajan\\ 
 Department of Electrical Communication Engineering, Indian Institute of Science, Bengaluru 560012, KA, India \\ E-mail:~\{vaddi,~bsrajan\}@iisc.ac.in }
 
\maketitle
\begin{abstract}
In most of the network coding problems with $k$ messages, the existence of binary network coding solution over $\mathbb{F}_2$ depends on the existence of adequate sets of $k$-dimensional binary vectors such that each set comprises of linearly independent vectors. In a given $k \times n$ ($n \geq k$) binary matrix, there exist ${n}\choose{k}$ binary sub-matrices of size $k \times k$.  Every possible $k \times k$ sub-matrix may be of full rank or singular depending on the columns present in the matrix. In this work, for full rank binary matrix $\mathbf{G}$ of size $k \times n$ satisfying certain condition on minimum Hamming weight, we establish a relation between the number of full rank sub-matrices of size $k \times k$ and the weight enumerating function of the error correcting code with $\mathbf{G}$ as the generator matrix. We give an algorithm to compute the number of full rank $k \times k$ submatrices.
\end{abstract}

\section{Introduction}
\IEEEPARstart{N}etwork coding is a technique to enhance the rate of information transmission in a network by taking advantage of redundancy in demands of various receivers. Network coding was introduced in \cite{NC1}. An acyclic network can be represented as an acyclic directed graph $D=(V,E)$, where $V$ is the set of all nodes and $E$ is the set of all edges in the network. Every edge $E$ in $D$ can carry one message symbol from the given finite field $\mathbb{F}_q$. There exists a unique node $S$ in $D$, called the source node, the source node has $k$ message symbols $x_i \in \mathbb{F}_q$ for $i \in [1:k]$. There exist some receiver nodes in the network and each receiver wants some subset of message symbols $\{x_1,x_2,\ldots,x_k\}$. The objective in network coding is to satisfy the demands of all receivers by minimizing the number of transmissions in the edges of the network. 

A solution to the network coding problem is the design of $k$-dimensional precoding vectors (these precoding vectors are called global encoding kernels in network coding) to each edge in the network such that these precoding vectors satisfy some linear independence constraints. An $k$-dimensional linear network code on an acyclic network over a field $\mathbb{F}_q$ consists of a global encoding mapping $f_{n_i,n_j}:\mathbb{F}_q^{k}\rightarrow \mathbb{F}_q$ for each edge $(n_i,n_j) \in E$ in the network. The existence of linear network coding solutions was studied in \cite{NC2} and \cite{NC3}. A construction of linear network coding for multicast network coding problems was studied in \cite{NC4}. Depending on the network topology and requirement of the receivers, a network may or may not have a solution in $\mathbb{F}_2$. In most of the network coding problems, the existence of binary network coding solution over $\mathbb{F}_2$ depends on the existence of adequate sets of $k$-dimensional binary vectors such that each set comprises of linearly independent vectors.


The problem of index coding with side-information was introduced by Birk and Kol \cite{ISCO}. In index coding problems, we often require the condition that a specific number of $k \times k$ submatrices in a given $k \times n$ matrix needs to have full rank. In index coding, the default choice for this $k \times n$ matrix is an $k \times n$ Vandermonde matrix. The Vandermonde matrix exists over higher fields and the field size required depends on $n$. Hence, it is useful to design binary $k \times n$ matrices satisfying the given linear independence conditions.

\subsection{Motivating Example}
Consider the network coding problem as shown in Fig. \ref{fig11}. In this network coding problem, the source comprises of two messages and there exist six receivers, each of the receiver wants both the messages. For this network, linear solution over $\mathbb{F}_2$ is not possible \cite{RY}. This is known as the ${4}\choose{2}$ combinational network. 

However, for the network coding problem shown in Fig. \ref{fig11}, if we remove any one receiver out of six receivers, then linear solution over $\mathbb{F}_2$ is possible as shown in Table \ref{table1}. This follows from the fact that for any binary matrix of size $4 \times 2$, there exist atmost $5$ full rank submatrices of size $2 \times 2$ and this also means that binary coding solution does not exist for the ${4}\choose{2}$ combinational network. Hence, the number of full rank $k \times k$ submatrices in a binary matrix of size $k \times n$ is useful in analysing some multicast network coding problems.
\begin{figure}[h]
\centering
\includegraphics[scale=0.4]{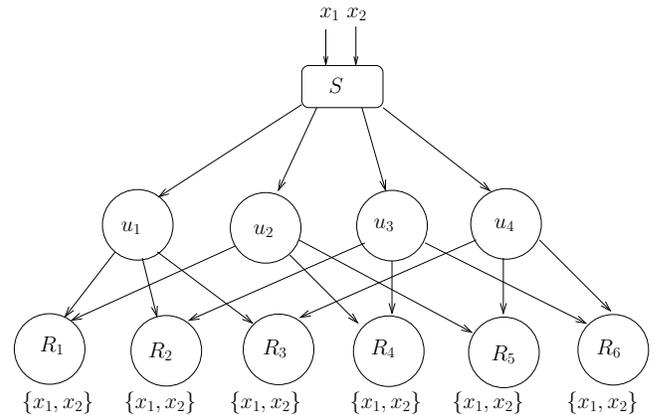}
\caption{An instance of network coding problem with no linear binary solution.}
\label{fig11}
\end{figure}
\begin{table}[h]
\centering
\setlength\extrarowheight{2pt}
\begin{tabular}{|c|c|}
\hline
Receivers Present & Global encoding kernels for $\text{Out}(S)$ \\
\hline
$R_1,R_2,R_3,R_4,R_5$ & $f_{s,u_1}=[1~0]^\mathsf{T},f_{s,u_2}=[0~1]^\mathsf{T}$ \\
~ & $f_{s,u_3}=[1~1]^\mathsf{T},f_{s,u_4}=[1~1]^\mathsf{T}$ \\
\hline
$R_1,R_2,R_3,R_4,R_6$ & $f_{s,u_1}=[1~0]^\mathsf{T},f_{s,u_2}=[1~1]^\mathsf{T}$ \\
~ & $f_{s,u_3}=[0~1]^\mathsf{T},f_{s,u_4}=[1~1]^\mathsf{T}$ \\
\hline
$R_1,R_2,R_3,R_5,R_6$ & $f_{s,u_1}=[1~0]^\mathsf{T},f_{s,u_2}=[1~1]^\mathsf{T}$ \\
~ & $f_{s,u_3}=[1~1]^\mathsf{T},f_{s,u_4}=[0~1]^\mathsf{T}$ \\
\hline
$R_1,R_2,R_4,R_5,R_6$ & $f_{s,u_1}=[1~1]^\mathsf{T},f_{s,u_2}=[1~0]^\mathsf{T}$ \\
~ & $f_{s,u_3}=[0~1]^\mathsf{T},f_{s,u_4}=[1~1]^\mathsf{T}$ \\
\hline
$R_1,R_3,R_4,R_5,R_6$ & $f_{s,u_1}=[1~1]^\mathsf{T},f_{s,u_2}=[1~0]^\mathsf{T}$ \\
~ & $f_{s,u_3}=[1~1]^\mathsf{T},f_{s,u_4}=[0~1]^\mathsf{T}$ \\
\hline
$R_2,R_3,R_4,R_5,R_1$ & $f_{s,u_1}=[1~1]^\mathsf{T},f_{s,u_2}=[1~1]^\mathsf{T}$ \\
~ & $f_{s,u_3}=[1~0]^\mathsf{T},f_{s,u_4}=[0~1]^\mathsf{T}$ \\
\hline
\end{tabular}
\vspace{5pt}
\caption{Global encoding kernels at $\text{Out}(S)$ for the six different network coding problems obtained after removing one receiver $R_i$ for $i \in [1:6]$ in Fig. \ref{fig11}. Global encoding kernels for the edges in $\text{Out}(u_i)$ is equal to $f_{s,u_i}$ for $i\in [1:4].$}
\label{table1}
\vspace{-5pt}
\end{table}

A linear $(n,k)$ error correcting code over the field $\mathbb{F}_q$ is a $k$-dimensional subspace of $\mathbb{F}_q^n$. Let $\mathfrak{C}$ denote the $(n,k)$ linear error correcting code. Let $\mathbf{G}$ and $\mathbf{H}$ be the generator and parity check matrix of $\mathfrak{C}$. The weight enumerating function of $\mathfrak{C}$ is given by the polynomial 
\begin{align*}
W_{\mathfrak{C}}(x,y)=\sum_{d=0}^{n}A_d x^{n-d} y^d,
\end{align*}
where $A_d$ is the number of codewords with Hamming weight $d$ and $x,y$ are indeterminates. We refer $A_d$s for $d \in [0:n]$ as weight enumerator coefficients (WECs). The linear block code generated by $\mathbf{H}$ as generator matrix is called dual of $\mathfrak{C}$ and is denoted by $\mathfrak{C}^\mathsf{T}$. The relation between weight enumerator $W_{\mathfrak{C}}(x,y)$ of $\mathfrak{C}$ and  weight enumerator $W_{\mathfrak{C}^\mathsf{T}}(x,y)$ of $\mathfrak{C}^\mathsf{T}$ was derived by MacWilliams in \cite{ECC2}. $W_{\mathfrak{C}}(x,y)$ and $W_{\mathfrak{C}^\mathsf{T}}(x,y)$ are related as
\begin{align}
\label{wedual}
W_{\mathfrak{C}^\mathsf{T}}(x,y)=\frac{1}{\vert \mathfrak{C} \vert} W_{\mathfrak{C}}(x+y,x-y).
\end{align}

For more details on error correcting codes and details on weight enumerators, the readers are referred to \cite{ECC}.

Given a subspace $\mathbf{V}$ of $\mathbb{F}^n_2$, the space of all vectors orthogonal to $\mathbf{V}$ in $\mathbb{F}^n_2$ is called orthogonal complement of $\mathbf{V}$. For any given arbitrary full rank matrix $\mathbf{G}$ of size $k \times n$, from the fundamental theorem of linear algebra, the null space is the orthogonal complement of the row space. The dimension of the row space of $\mathbf{G}$ is $k$, the dimension of null space of $\mathbf{G}$ is $n-k$ and the sum of the dimension of the null space and row space is equal to $n$. For any given arbitrary full rank matrix $\mathbf{A}$ of size $k \times n$, we refer $\mathbf{B}$ of size $(n-k) \times n$ as orthogonal complement of $\mathbf{A}$ if the rank of $\mathbf{B}$ is $n-k$ and $\mathbf{A}\mathbf{B}^{\mathsf{T}}$ is all zero matrix. 

Let $\mathbf{G}$ be a full rank $k \times n$ binary matrix. Let $\mathbf{H}$ be an orthogonal complement of $\mathbf{G}$. If $k \geq n-k$, let $d^*$ be the minimum Hamming weight of the error correcting code generated by $\mathbf{H}$. Else, let $d^*$ be the minimum Hamming weight of the error correcting code generated by $\mathbf{G}$.

\subsection{Contributions}
The contributions of this paper are summarized below:
\begin{itemize}
\item For a given binary full rank $k \times n$ matrix $\mathbf{G}$, we show that the number of full rank $k \times k$ submatrices in $\mathbf{G}$ are equal to the number of full rank $(n-k) \times (n-k)$ submatrices in the orthogonal complement of $\mathbf{G}$.
\item In any binary full rank matrix $\mathbf{G}$ of size $k \times n$, if $\frac{3d^*}{2} > \text{max}(k,n-k)$, we establish a relation between the weight enumerating function $W_{\mathfrak{C}}(x,y)$ of $\mathfrak{C}$ generated by $\mathbf{G}$ and the number of full rank submatrices of size $k \times k$ in $\mathbf{G}$. We explicitly quantify the number of full rank submatrices of size $k \times k$  in $\mathbf{G}$.
\item For a binary full rank matrix $\mathbf{G}$ of size $k \times n$ satisfying the condition $\frac{3d^*}{2} > \text{max}(k,n-k)$, we give an algorithm to compute the number of full rank submatrices of size $k \times k$ in $\mathbf{G}$.
\end{itemize}


\section{On weight enumerator function and the number of full-rank matrices}
\label{sec2}
Lemma \ref{lemma1} and Lemma \ref{lemma2} given below are useful to derive the main result in this section.
\begin{lemma}
\label{lemma1}
Let $\mathbf{G}$ be a binary $k \times n$ matrix. Let $\mathbf{G}^{(s)}$ be the $k \times n$ symmetric matrix obtained from $\mathbf{G}$ by using elementary row operations. The matrix $\mathbf{G}^{(s)}$ is of the form $[\mathbf{I}_{k \times k}: \mathbf{P}_{k \times (n-k)}]$. If the $k$ columns corresponding to the indices $\{i_1,i_2,\ldots,i_k\} \subset [1:n]$ are linearly dependent in $\mathbf{G}$, then the $k$ columns corresponding to the indices $\{i_1,i_2,\ldots,i_k\}$ are linearly dependent in $\mathbf{G}^{(s)}$ also. 
\end{lemma}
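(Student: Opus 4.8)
The plan is to view the passage from $\mathbf{G}$ to $\mathbf{G}^{(s)}$ as left multiplication by an invertible matrix and then to observe that such a multiplication leaves every linear dependency among the columns intact. First I would recall that performing a sequence of elementary row operations on $\mathbf{G}$ amounts to forming the product $\mathbf{G}^{(s)} = \mathbf{M}\mathbf{G}$ for some invertible $k \times k$ binary matrix $\mathbf{M}$ (namely a product of elementary matrices). The hypothesis that $\mathbf{G}^{(s)}$ has the stated shape $[\mathbf{I}_{k\times k} : \mathbf{P}_{k \times (n-k)}]$ is precisely the assertion that the first $k$ columns of $\mathbf{G}$ are linearly independent, which is what guarantees that such an $\mathbf{M}$ exists; this is where the standing full-rank assumption on $\mathbf{G}$ (after a fixed relabeling of columns, if necessary) enters.

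Next, fix the index set $S = \{i_1, i_2, \ldots, i_k\}$ and write $\mathbf{G}_S$ and $\mathbf{G}^{(s)}_S$ for the corresponding $k \times k$ column submatrices. Since row operations act on the left and never mix columns, restricting the identity $\mathbf{G}^{(s)} = \mathbf{M}\mathbf{G}$ to the columns indexed by $S$ yields $\mathbf{G}^{(s)}_S = \mathbf{M}\,\mathbf{G}_S$. Because $\mathbf{M}$ is invertible over $\mathbb{F}_2$, we get $\det(\mathbf{G}^{(s)}_S) = \det(\mathbf{M})\det(\mathbf{G}_S) = \det(\mathbf{G}_S)$, and more generally $\mathrm{rank}(\mathbf{G}^{(s)}_S) = \mathrm{rank}(\mathbf{G}_S)$. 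Hence if the columns of $\mathbf{G}$ indexed by $S$ are linearly dependent---equivalently $\mathbf{G}_S$ is singular---then $\mathbf{G}^{(s)}_S$ is singular too, i.e.\ the columns of $\mathbf{G}^{(s)}$ indexed by $S$ are linearly dependent. An equivalent matrix-free phrasing of the same step: a dependency $\sum_{j} c_j \mathbf{g}_{i_j} = \mathbf{0}$ with coefficient vector $(c_1,\ldots,c_k)\neq \mathbf{0}$ is carried by the linear map $\mathbf{M}$ to $\sum_j c_j \mathbf{g}^{(s)}_{i_j} = \mathbf{0}$ with the very same coefficient vector.

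I do not anticipate a genuine obstacle; the lemma is essentially the remark that elementary row operations are invertible linear maps acting columnwise, so column rank patterns are invariant. The only points that deserve a line of justification are (i) that the systematic form $[\mathbf{I}:\mathbf{P}]$ is actually reachable by row operations alone, handled by the full-rank hypothesis as noted above, and (ii) that the determinant/rank preservation argument is legitimate over $\mathbb{F}_2$, which it is. It is also worth recording that the same reasoning gives the stronger biconditional---columns indexed by $S$ are dependent in $\mathbf{G}$ if and only if they are dependent in $\mathbf{G}^{(s)}$---and this two-sided form is what is really convenient for the full-rank counting arguments developed later in the section.
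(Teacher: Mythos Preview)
Your argument is correct and is essentially the same as the paper's: both show that a column dependency $\sum_j a_j c_{i_j}=0$ in $\mathbf{G}$ survives in $\mathbf{G}^{(s)}$ because each row of $\mathbf{G}^{(s)}$ is a linear combination of the rows of $\mathbf{G}$. The only difference is packaging---you encode the row operations as left multiplication by an invertible $\mathbf{M}$ and invoke rank preservation, whereas the paper writes out the linear functions $f_i$ coordinatewise; your formulation is cleaner but not a different idea.
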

\begin{proof}
Let $r_i^\prime$ and $r_i$ be the $i$th row of $\mathbf{G}$ and $\mathbf{G}^{(s)}$ respectively for every $i \in [1:k]$. Let $c_j^\prime$ and $c_j$ be the $j$th column of $\mathbf{G}$ and $\mathbf{G}^{(s)}$ respectively for every $j \in [1:n]$. Let $g_{i,j}^\prime$ and $g_{i,j}$ be the element in the $i$th row and $j$th column of $\mathbf{G}$ and $\mathbf{G}^{(s)}$ respectively for every $i \in [1:k]$ and $j \in [1:n]$. In the matrix $\mathbf{G}^{(s)}$, let the $i$th row be a linear combination of $k$ rows in the matrix $\mathbf{G}$. We can write 
\begin{align}
\label{lemma1eq1}
r_i=f_i(r_1^\prime,r_2^\prime,\ldots,r_k^\prime)~\text{for}~i \in [1:k], 
\end{align}
where $f_i$ is some linear function over $\mathbb{F}_2$. From \eqref{lemma1eq1}, we can write $g_{i,j}$ as
\begin{align}
\label{lemma1eq11}
g_{i,j}=f_i(g_{1,j}^\prime,g_{2,j}^\prime,\ldots,g_{k,j}^\prime)~\text{for}~i \in [1:k]~\text{and}~j \in [1:n].
\end{align}
Given that the $k$ columns corresponding to the indices $\{i_1,i_2,\ldots,i_k\} \subset [1:n]$ are linearly dependent in $\mathbf{G}$. There exists $a_1,a_2,\ldots,a_k \in \mathbb{F}_2$ such that 
\begin{align}
\label{lemma1eq2}
a_1 c_{i_1}^\prime+a_2 c_{i_2}^\prime+\ldots+a_k c_{i_k}^\prime=0.
\end{align}
From \eqref{lemma1eq2}, we have
\begin{align}
\label{lemma1eq3}
\nonumber
&a_1 g_{1,i_1}^\prime+a_2 g_{1,i_2}^\prime+\ldots+a_k g_{1,i_k}^\prime=0 \\&
\nonumber
a_1 g_{2,i_1}^\prime+a_2 g_{2,i_2}^\prime+\ldots+a_k g_{2,i_k}^\prime=0 \\&
\nonumber
~~~~~~~~~~~~~\vdots~~~~~~~~~~~~~~~~\vdots \\&
a_1 g_{k,i_1}^\prime+a_2 g_{k,i_2}^\prime+\ldots+a_k g_{k,i_k}^\prime=0.
\end{align}
From \eqref{lemma1eq3}, for every $i \in [1:k]$, we have 
\begin{align}
\label{lemma1eq4}
\nonumber
&a_1f_i(g_{1,i_1}^\prime,g_{2,i_1}^\prime,\ldots,g_{k,i_1}^\prime)+\\&
\nonumber
a_2f_i(g_{1,i_2}^\prime,g_{2,i_2}^\prime,\ldots,g_{k,i_2}^\prime)+\ldots+\\&
a_kf_i(g_{1,i_k}^\prime,g_{2,i_k}^\prime,\ldots,g_{k,i_k}^\prime)=0.
\end{align}
From \eqref{lemma1eq4}, we have
\begin{align}
\label{lemma1eq5}
a_1g_{i,i_1}+a_2g_{i,i_2}+\ldots+a_k g_{i,i_k}=0~\text{for~every~}i \in [1:k].
\end{align}
From \eqref{lemma1eq5}, we have
\begin{align*}
a_1 c_{i_1}+a_2 c_{i_2}+\ldots+a_k c_{i_k}=0.
\end{align*}
That is, the columns $c_{i_1},c_{i_2},\ldots,c_{i_k}$ of $\mathbf{G}^{(s)}$ are linearly dependent.
\end{proof}

\begin{lemma}
\label{lemma2}
Let $\mathbf{G}$ be a binary $k \times n$ matrix of the form $[\mathbf{I}_{k \times k}: \mathbf{P}_{k \times (n-k)}]$. Define the $(n-k) \times n$ matrix $\mathbf{H}$ as given below
\begin{align*}
\mathbf{H}=[\mathbf{P}_{k \times (n-k)}^{\mathsf{T}}: \mathbf{I}_{(n-k) \times (n-k)}].
\end{align*}
If $\{i_1,i_2,\ldots,i_k\} \subset [1:n]$ be indices of any $k$ linearly dependent columns in $\mathbf{G}$, then the $(n-k)$ columns corresponding to the indices $[1:n]/ \{i_1,i_2,\ldots,i_k\}$ in $\mathbf{H}$ are linearly dependent. 
\end{lemma}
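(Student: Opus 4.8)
The plan is to read $\mathbf{H}$ as a parity-check matrix for the code generated by $\mathbf{G}$ and to translate both linear-dependence statements into statements about a single witness vector. The starting point is the orthogonality relation, obtained by a direct block multiplication:
\begin{align}
\mathbf{G}\mathbf{H}^{\mathsf{T}}
=[\mathbf{I}_{k\times k}:\mathbf{P}]\begin{bmatrix}\mathbf{P}\\ \mathbf{I}_{(n-k)\times(n-k)}\end{bmatrix}
=\mathbf{P}+\mathbf{P}=\mathbf{0},
\end{align}
the last equality holding over $\mathbb{F}_2$; transposing gives $\mathbf{H}\mathbf{G}^{\mathsf{T}}=\mathbf{0}$, so the row space of $\mathbf{G}$ sits inside the null space of $\mathbf{H}$. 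In these terms, ``the columns of $\mathbf{H}$ indexed by $[1:n]/\{i_1,\dots,i_k\}$ are linearly dependent'' is the same as ``there is a nonzero vector in the null space of $\mathbf{H}$ whose support avoids $\{i_1,\dots,i_k\}$'', and I will manufacture such a vector from the rows of $\mathbf{G}$.

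Write $S=\{i_1,\dots,i_k\}$ and let $\mathbf{G}_S$ be the $k\times k$ submatrix of $\mathbf{G}$ formed by the columns indexed by $S$. The hypothesis says $\mathbf{G}_S$ has rank less than $k$, hence so does $\mathbf{G}_S^{\mathsf{T}}$, so there is a nonzero $\mathbf{v}\in\mathbb{F}_2^{k}$ with $\mathbf{G}_S^{\mathsf{T}}\mathbf{v}=\mathbf{0}$; unwinding the definition of $\mathbf{G}_S^{\mathsf{T}}$, this means precisely that $\mathbf{v}$ is orthogonal to the $i$-th column of $\mathbf{G}$ for every $i\in S$. Now set $\mathbf{b}=\mathbf{G}^{\mathsf{T}}\mathbf{v}\in\mathbb{F}_2^{n}$. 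Its $i$-th entry is the inner product of the $i$-th column of $\mathbf{G}$ with $\mathbf{v}$, so $b_i=0$ for all $i\in S$, i.e. $\mathbf{b}$ is supported on $[1:n]/S$. Also $\mathbf{b}\neq\mathbf{0}$, since the rows of $\mathbf{G}=[\mathbf{I}_{k\times k}:\mathbf{P}]$ are linearly independent (the identity block alone forces this), so $\mathbf{G}^{\mathsf{T}}$ has trivial kernel and $\mathbf{v}\neq\mathbf{0}$ gives $\mathbf{b}\neq\mathbf{0}$.

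Finally, $\mathbf{H}\mathbf{b}=\mathbf{H}\mathbf{G}^{\mathsf{T}}\mathbf{v}=\mathbf{0}$ by the orthogonality relation. Writing $\mathbf{h}_j$ for the $j$-th column of $\mathbf{H}$, this reads $\sum_{j=1}^{n}b_j\mathbf{h}_j=\mathbf{0}$; discarding the terms with $j\in S$ (for which $b_j=0$) leaves $\sum_{j\in[1:n]/S}b_j\mathbf{h}_j=\mathbf{0}$, a nontrivial relation because $\mathbf{b}\neq\mathbf{0}$ is supported on $[1:n]/S$. Hence the $(n-k)$ columns of $\mathbf{H}$ indexed by $[1:n]/S$ are linearly dependent, which is the claim. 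I do not anticipate a real obstacle: the only care needed is the bookkeeping of which submatrix is square ($\mathbf{G}_S$ is $k\times k$, whereas the targeted submatrix of $\mathbf{H}$ is $(n-k)\times(n-k)$) and the routine fact that rank deficiency, singularity, and the existence of a nonzero kernel vector coincide for a square matrix over $\mathbb{F}_2$; all the substance is the identity $\mathbf{G}\mathbf{H}^{\mathsf{T}}=\mathbf{0}$ together with the dimension count hidden in passing from $\mathbf{G}_S$ to $\mathbf{b}$. (Equivalently one could phrase the same argument via rank--nullity applied to the coordinate projection of the row space of $\mathbf{G}$ onto the positions in $S$, but the explicit construction above seems cleanest.)
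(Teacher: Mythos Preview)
Your argument is correct. Both you and the paper open with the block computation $\mathbf{G}\mathbf{H}^{\mathsf{T}}=\mathbf{0}$ and then aim to exhibit a nonzero vector in the null space of $\mathbf{H}$ supported on $[1:n]\setminus S$. The difference is in how that witness is produced. The paper works with the \emph{indicator vectors}: it takes $\mathbf{v}$ to be the $0/1$ vector supported exactly on $S$, asserts $\mathbf{v}$ lies in the null space of $\mathbf{G}$ (equivalently the row space of $\mathbf{H}$), and then passes to the complementary indicator $\tilde{\mathbf{v}}$, which it places in the row space of $\mathbf{G}$ by orthogonal-complement reasoning and feeds to $\mathbf{H}$. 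Your route instead picks a nonzero $\mathbf{v}\in\ker(\mathbf{G}_S^{\mathsf{T}})$ and sets $\mathbf{b}=\mathbf{G}^{\mathsf{T}}\mathbf{v}$, so that $\mathbf{b}$ is automatically a codeword (hence $\mathbf{H}\mathbf{b}=\mathbf{0}$), is nonzero because $\mathbf{G}$ has full row rank, and vanishes on $S$ by construction. What your approach buys is robustness: ``the $k$ columns indexed by $S$ are linearly dependent'' only guarantees \emph{some} nontrivial relation among them, not that their unweighted sum is zero, so the paper's claim that the indicator of $S$ lies in the null space of $\mathbf{G}$ is not justified as stated, and the subsequent placement of $\tilde{\mathbf{v}}$ in the row space of $\mathbf{G}$ is likewise underargued. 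Your construction sidesteps both issues by letting the actual dependence coefficients determine the witness, at the cost of one extra line (checking $\mathbf{b}\neq\mathbf{0}$ via injectivity of $\mathbf{G}^{\mathsf{T}}$).
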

\begin{proof}
We have 
\begin{align*}
\mathbf{G}\mathbf{H}^{\mathsf{T}}&=[\mathbf{I}_{k \times k}: \mathbf{P}_{k \times (n-k)}][\mathbf{P}_{k \times (n-k)}^{\mathsf{T}}: \mathbf{I}_{(n-k) \times (n-k)}]^{\mathsf{T}}\\&=\mathbf{P}_{k \times (n-k)}+\mathbf{P}_{k \times (n-k)}=\mathbf{0}_{k \times (n-k)}.
\end{align*}

The rank of the matrices $\mathbf{G}$ and $\mathbf{H}$ are $k$ and $n-k$ respectively and every row in $\mathbf{G}$ is orthogonal (inner product is zero) to every other row in $\mathbf{H}$ over $\mathbb{F}_2$. Hence, the row space spanned by $\mathbf{G}$ and $\mathbf{H}$ are orthogonal complements. Let $\{i_1,i_2,\ldots,i_k\} \subset [1:n]$ be indices of any $k$ linearly dependent columns in $\mathbf{G}$. Consider a binary vector $\mathbf{v} \in \mathbb{F}_2^n$ such that $\mathbf{v}$ has $1$s in $i_1,i_2,\ldots,i_k$ positions and $0$ in other positions. The vector $\mathbf{v}$ must be in the null space of $\mathbf{G}$ and row space of $\mathbf{H}$. Consider a binary vector $\tilde{\mathbf{v}} \in \mathbb{F}_2^n$ such that it has $0$ in $i_1,i_2,\ldots,i_k$ positions and $1$ in other positions. The vector $\tilde{\mathbf{v}}$ must be in the row space of $\mathbf{G}$ (row space of $\mathbf{G}$ and $\mathbf{H}$ are orthogonal complements). Hence, we have $\mathbf{H}\tilde{\mathbf{v}}=0$ and the columns corresponding to the indices $[1:n]/ \{i_1,i_2,\ldots,i_k\}$ in $\mathbf{H}$ are linearly dependent. 
\end{proof}
\begin{corollary}
\label{lemma3}
Let $\mathbf{G}$ be a binary $k \times n$ matrix of the form $[\mathbf{I}_{k \times k}: \mathbf{P}_{k \times (n-k)}]$. Define the $(n-k) \times n$ matrix $\mathbf{H}$ as given below
\begin{align*}
\mathbf{H}=[\mathbf{P}_{k \times (n-k)}^{\mathsf{T}}: \mathbf{I}_{(n-k) \times (n-k)}].
\end{align*}
If $\{i_1,i_2,\ldots,i_k\} \subset [1:n]$ be indices of any $k$ linearly independent columns in $\mathbf{G}$, then the $(n-k)$ columns corresponding to the indices $[1:n]/ \{i_1,i_2,\ldots,i_k\}$ in $\mathbf{H}$ are linearly independent. 
\end{corollary}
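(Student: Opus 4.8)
The plan is to argue by contradiction, running the argument in the proof of Lemma~\ref{lemma2} in the opposite direction. Suppose the $n-k$ columns of $\mathbf{H}$ indexed by $T:=[1:n]\setminus\{i_1,i_2,\ldots,i_k\}$ are linearly dependent. Then there is a nonzero $\mathbf{w}\in\mathbb{F}_2^n$ whose support is contained in $T$ and which satisfies $\mathbf{H}\mathbf{w}^{\mathsf{T}}=\mathbf{0}$. As already established in the proof of Lemma~\ref{lemma2}, the row spaces of $\mathbf{G}$ and $\mathbf{H}$ are orthogonal complements in $\mathbb{F}_2^n$, so the null space of $\mathbf{H}$ coincides with the row space of $\mathbf{G}$; hence $\mathbf{w}=\mathbf{u}\mathbf{G}$ for some nonzero $\mathbf{u}\in\mathbb{F}_2^k$.

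Next I would translate the support condition on $\mathbf{w}$ into a statement about $\mathbf{u}$. Writing $c_j$ for the $j$th column of $\mathbf{G}$, the $j$th entry of $\mathbf{w}=\mathbf{u}\mathbf{G}$ is the inner product $\mathbf{u}\,c_j$. Since $\mathbf{w}$ vanishes off $T$, we get $\mathbf{u}\,c_{i_\ell}=0$ for every $\ell\in[1:k]$, i.e. $\mathbf{u}$ is orthogonal to each of $c_{i_1},c_{i_2},\ldots,c_{i_k}$. By hypothesis these $k$ columns are linearly independent in $\mathbb{F}_2^k$, hence form a basis of $\mathbb{F}_2^k$; the only vector orthogonal to a spanning set is the zero vector, so $\mathbf{u}=\mathbf{0}$, contradicting $\mathbf{u}\neq\mathbf{0}$. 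This contradiction gives the claimed linear independence of the $n-k$ complementary columns of $\mathbf{H}$.

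I do not expect a serious obstacle here; this is essentially the contrapositive of Lemma~\ref{lemma2} with the roles of $\mathbf{G}$ and $\mathbf{H}$ interchanged, and the only point requiring a little care is the bookkeeping that makes that interchange legitimate. Concretely, an equivalent route is to permute the columns of $\mathbf{H}=[\mathbf{P}^{\mathsf{T}}_{k\times(n-k)}:\mathbf{I}_{(n-k)\times(n-k)}]$ into the standard form $[\mathbf{I}_{(n-k)\times(n-k)}:\mathbf{P}^{\mathsf{T}}_{k\times(n-k)}]$, observe that applying the construction of Lemma~\ref{lemma2} to this matrix reproduces $\mathbf{G}$ (up to the same column permutation), invoke Lemma~\ref{lemma2} in this symmetric situation to conclude that dependence of the $n-k$ columns of $\mathbf{H}$ indexed by $T$ forces dependence of the $k$ columns of $\mathbf{G}$ indexed by $[1:n]\setminus T=\{i_1,\ldots,i_k\}$, and then take the contrapositive. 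I would present the first, self-contained argument as the proof and, if space permits, remark on this symmetry as the conceptual reason the statement holds.
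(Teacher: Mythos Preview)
Your proposal is correct. The paper states this as a corollary with no proof, the intended justification being precisely the symmetry you identify in your second route: apply Lemma~\ref{lemma2} with the roles of $\mathbf{G}$ and $\mathbf{H}$ interchanged and take the contrapositive. Your self-contained first argument is a clean unwinding of that same idea, so there is no substantive difference in approach.
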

\begin{remark}
The number of linearly independent submatrices of size $k \times k$ in $\mathbf{G}$ is equal to the number of linearly independent submatrices of size $(n-k) \times (n-k)$ in $\mathbf{H}$.
\end{remark}
\begin{example}
\label{ex2}
Consider the matrices $\mathbf{G}$ and $\mathbf{H}$ given below.

\arraycolsep=0.0pt
\setlength\extrarowheight{-4.0pt}
{
$$\mathbf{G}=\left[\begin{array}{*{20}c}
   1~0~0~0~1~1~1  \\
   0~1~0~0~1~1~0 \\
   0~0~1~0~1~0~1 \\
   0~0~0~1~0~1~1

   \end{array}\right], \mathbf{H}=\left[\begin{array}{*{20}c}
   1~1~1~0~1~0~0  \\
   1~1~0~1~0~1~0 \\
   1~0~1~1~0~0~1 \\

   \end{array}\right].$$
   } 
   
The columns with indices $\{1,2,3,5\}$ are linearly dependent in $\mathbf{G}$. Hence, the columns $[1:7]\setminus \{1,2,3,5\}=\{4,6,7\}$ are linearly dependent in $\mathbf{H}$. 

The columns with indices $\{1,4,6,7\}$ are linearly independent in $\mathbf{G}$. Hence, the columns $[1:7]\setminus \{1,4,6,7\}=\{2,3,5\}$ are linearly independent in $\mathbf{H}$.
\end{example}

\begin{definition}
Let $\mathbf{G}=[\mathbf{I}_{k \times k}: \mathbf{P}_{k \times (n-k)}]$ be a binary full rank matrix of size $k \times n$. We can select any set of $t$ columns for $1 \leq t \leq n-k$ from the columns of $\mathbf{P}_{k \times (n-k)}$ in $2^{n-k}-1$ ways. Let $\{c_{i_1},c_{i_2},\ldots,c_{i_t}\}$ be the $t$ selected columns in $\mathbf{P}_{k \times (n-k)}$. The effective distance $d^{(e)}(i_1,i_2,\ldots,i_t)$ of these $t$ columns is defined as
\begin{align*}
d^{(e)}(i_1,i_2,\ldots,i_t)=d_H(c_{i_1}+c_{i_2}+\ldots+c_{i_t})+t,
\end{align*}
where $d_H(c_{i_1}+c_{i_2}+\ldots+c_{i_t})$ is the Hamming weight of $(n-k)$ dimensional binary vector $c_{i_1}+c_{i_2}+\ldots+c_{i_t}$.
\end{definition}
\begin{example}
\label{ex21}
Consider the matrix $\mathbf{G}$ given below.

\arraycolsep=0.0pt
\setlength\extrarowheight{-4.0pt}
{
$$\mathbf{G}=\left[\begin{array}{*{20}c}
   1~0~0~1~1~1  \\
   0~1~0~1~1~0 \\
   0~0~1~1~0~1 \\

   \end{array}\right].$$
   } 

The effective distances of the matrix $\mathbf{G}$ are as follows:
\begin{align*}
&d^{(e)}(4)=d_H(111)+1=4,\\& d^{(e)}(5)=d_H(110)+1=3,\\&
d^{(e)}(6)=d_H(101)+1=3, \\& d^{(e)}(4,5)=d_H(111+110)+2=3,\\&
d^{(e)}(4,6)=d_H(111+101)+2=3, \\& d^{(e)}(5,6)=d_H(101+110)+2=4,\\&
d^{(e)}(4,5,6)=d_H(111+110+101)+3=4.
\end{align*}
\end{example}

Theorem \ref{s1thm1} establishes the relation between weight enumerating function and the number of $k \times k$ singular matrices of a full rank binary matrix of size $k \times n$.
\begin{theorem}
\label{s1thm1}
Consider an arbitrary binary full rank matrix $\mathbf{G}$ of size $k \times n~(n \geq k)$. Let $\mathfrak{C}$ be the linear block code generated by $\mathbf{G}$ as generator matrix and $\mathfrak{C}^\mathsf{T}$ be the dual code of $\mathfrak{C}$. Let $\mathcal{D}$ be the number of singular submatrices of size $k \times k$ in $\mathbf{G}$. If 
$\frac{3d^*}{2} > \text{max}(k,n-k)$, then 
\begin{align}
\label{s1main}
\mathcal{D} = \sum_{d=d^*}^kA_d{{n-d}\choose{n-k}},
\end{align}
where 
\begin{itemize}
\item $A_d$s are the weight enumerating coefficients of $W_{\mathfrak{C}^\mathsf{T}}(x,y)$ if $k \geq n-k$
\item $A_d$s are the weight enumerating coefficients of $W_{\mathfrak{C}}(x,y)$ if  $k < n-k$.
\end{itemize}
\end{theorem}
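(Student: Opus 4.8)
The plan is to reduce the count $\mathcal{D}$ to a count over codewords of an appropriate code and then invoke the hypothesis $\tfrac{3d^*}{2}>\max(k,n-k)$ to rule out any overcounting.

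\textbf{Normalisation and a dual reformulation.} Elementary row operations on $\mathbf{G}$ preserve every linear dependence among columns in both directions (Lemma~\ref{lemma1} applied to $\mathbf{G}$ and, with the inverse operations, to its row‑reduced form), so $\mathcal{D}$ is unchanged if we replace $\mathbf{G}$ by $[\mathbf{I}_{k\times k}:\mathbf{P}_{k\times(n-k)}]$; assume $\mathbf{G}$ has this form and let $\mathbf{H}=[\mathbf{P}^{\mathsf{T}}:\mathbf{I}_{(n-k)\times(n-k)}]$ be the orthogonal complement of Lemma~\ref{lemma2}. For a $k$‑subset $S\subseteq[1:n]$, the columns of $\mathbf{G}$ indexed by $S$ are linearly dependent if and only if there is a nonzero $v\in\mathbb{F}_2^n$ with $\mathrm{supp}(v)\subseteq S$ and $\mathbf{G}v=0$, i.e. a nonzero codeword of the null space of $\mathbf{G}$ — equivalently of the code $\mathfrak{C}^{\mathsf{T}}$ generated by $\mathbf{H}$ — supported on $S$. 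Hence $\mathcal{D}$ is the number of $k$‑subsets $S$ that contain the support of some nonzero codeword of $\mathfrak{C}^{\mathsf{T}}$.

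\textbf{The case $k\ge n-k$.} Here $d^*$ is the minimum weight of $\mathfrak{C}^{\mathsf{T}}$ and $A_d$ is its number of weight‑$d$ codewords. The key step is to show that a single $k$‑subset $S$ cannot contain the supports of two \emph{distinct} nonzero codewords $v,w\in\mathfrak{C}^{\mathsf{T}}$. If it did, then $v$, $w$, $v+w$ would be three distinct nonzero codewords, each supported in $S$ and hence of weight at least $d^*$; combining $d_H(v)+d_H(w)=|\mathrm{supp}(v)\cup\mathrm{supp}(w)|+|\mathrm{supp}(v)\cap\mathrm{supp}(w)|$ with $d_H(v+w)=|\mathrm{supp}(v)\cup\mathrm{supp}(w)|-|\mathrm{supp}(v)\cap\mathrm{supp}(w)|$ yields
\begin{align*}
|\mathrm{supp}(v)\cup\mathrm{supp}(w)|=\tfrac12\big(d_H(v)+d_H(w)+d_H(v+w)\big)\ge\tfrac{3d^*}{2}>\max(k,n-k)=k,
\end{align*}
contradicting $|\mathrm{supp}(v)\cup\mathrm{supp}(w)|\le|S|=k$. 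Thus every dependent $k$‑subset contains the support of a unique nonzero codeword of $\mathfrak{C}^{\mathsf{T}}$, so $\mathcal{D}$ equals the number of pairs $(v,S)$ with $0\ne v\in\mathfrak{C}^{\mathsf{T}}$, $|S|=k$, and $\mathrm{supp}(v)\subseteq S$. A codeword $v$ admits such an $S$ only if $d_H(v)\le k$ (and $d_H(v)\ge d^*$ automatically), and for a fixed $v$ of weight $d$ the number of admissible $S$ is $\binom{n-d}{\,k-d\,}=\binom{n-d}{\,n-k\,}$; summing over weights gives $\mathcal{D}=\sum_{d=d^*}^{k}A_d\binom{n-d}{\,n-k\,}$.

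\textbf{The case $k<n-k$, and the main obstacle.} By the Remark following Corollary~\ref{lemma3}, $\mathcal{D}$ also equals the number of singular $(n-k)\times(n-k)$ submatrices of $\mathbf{H}$. Since $\mathbf{H}$ has $n-k\ge n-(n-k)$ rows, its null space is the code $\mathfrak{C}$ generated by $\mathbf{G}$ (of minimum weight $d^*$), and the hypothesis reads $\tfrac{3d^*}{2}>\max(k,n-k)=n-k$, applying the previous paragraph verbatim to $\mathbf{H}$ (with the roles of $k$ and $n-k$ interchanged) expresses $\mathcal{D}$ through the weight enumerator coefficients $A_d$ of $W_{\mathfrak{C}}(x,y)$, which is the asserted relation. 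Everything apart from the uniqueness claim is a direct appeal to Lemmas~\ref{lemma1} and~\ref{lemma2} and their corollary, or elementary binomial bookkeeping; the real content — and the place the peculiar hypothesis is consumed — is the three‑codeword support estimate, which is exactly what forces $|\mathrm{supp}(v)\cup\mathrm{supp}(w)|>k$ so that two overlapping dual codewords cannot coexist inside one $k$‑subset. A secondary care point is the boundary bookkeeping: one must check that codewords of weight exceeding $\max(k,n-k)$ contribute nothing and that the lower summation limit coincides with the true minimum distance of the relevant code.
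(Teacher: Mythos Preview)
Your proof is correct and follows essentially the same route as the paper: reduce to systematic form via Lemma~\ref{lemma1}, identify the singular $k$-subsets as those containing the support of a nonzero dual codeword, use the hypothesis $\tfrac{3d^*}{2}>\max(k,n-k)$ to rule out two dual codewords sharing a common $k$-subset, and then handle the case $k<n-k$ by passing to $\mathbf{H}$ via Lemma~\ref{lemma2} and its corollary. The paper packages the first identification through the auxiliary notion of ``effective distance'' of column subsets of $\mathbf{P}$, whereas you go directly to dual codeword supports; and for the no-double-counting step the paper argues that $(k-d_1)+(k-d_2)\ge d^*$ from the symmetric difference of the two supports, which is algebraically the same inequality as your cleaner identity $d_H(v)+d_H(w)+d_H(v+w)=2|\mathrm{supp}(v)\cup\mathrm{supp}(w)|$.
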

\begin{proof}
Let $\mathbf{G}^{(s)}=[\mathbf{I}_{k \times k}~\mathbf{P}_{k \times (n-k)}]$ be the systematic generator matrix for $\mathfrak{C}$. From Lemma \ref{lemma1}, the number of linearly dependent $k$-column sets in $\mathbf{G}$ is equal to the number of linearly dependent $k$-column sets in $\mathbf{G}^{(s)}$. Let $\mathbf{H}=[\mathbf{P}_{n-k \times k}^\mathsf{T}~\mathbf{I}_{(n-k) \times (n-k)}]$ be the parity check matrix for $\mathfrak{C}$. Let $c_j$ be the $j$th column of $\mathbf{G}^{(s)}$ for $j \in [1:n]$.

\textbf{Case (i)} $k \geq n-k$:-

For every $\{i_1,i_2,\ldots,i_t\} \subseteq [k+1:n]$, let $j_1,j_2,\ldots$, $j_{d_H(c_{i_1}+c_{i_2}+\ldots+c_{i_t})}$ be the positions of 1s present in $c_{i_1}+c_{i_2}+\ldots+c_{i_t}$. Note that $c_{j_1},c_{j_2},\ldots,c_{j_{d_H(c_{i_1}+c_{i_2}+\ldots+c_{i_t})}}$ are the columns of identity matrix present in $\mathbf{G}^{(s)}$ with $1$s in $j_1,j_2,\ldots$, $j_{d_H(c_{i_1}+c_{i_2}+\ldots+c_{i_t})}$ positions respectively. Hence, in the matrix $\mathbf{G}$, the $d^{(e)}(i_1,i_2,\ldots,i_t)$ number of columns $$\underbrace{\{c_{i_1},c_{i_2},\ldots,c_{i_t}\}}_{t~\text{columns}} \cup \underbrace{\{c_{j_1},c_{j_2},\ldots,c_{j_{d_H(c_{i_1}+c_{i_2}+\ldots,c_{i_t})}}\}}_{d_H(c_{i_1}+c_{i_2}+\ldots+c_{i_t})}$$ are linearly dependent. 

If $d^{(e)}(i_1,i_2,\ldots,i_t)\leq k$, then, any combination of $k-d^{(e)}(i_1,i_2,\ldots,i_t)$ columns in the remaining $n-d^{(e)}(i_1,i_2,\ldots,i_t)$ along with the $d^{(e)}(i_1,i_2,\ldots,i_t)$ columns $\{c_{i_1},c_{i_2},\ldots,c_{i_t}\} \cup \{c_{j_1},c_{j_2},\ldots,c_{j_{d_H(c_{i_1}+c_{i_2}+\ldots,c_{i_t})}}\} $of $\mathbf{G}$ are linearly dependent. Hence, there exists ${n-d^{(e)}(i_1,i_2,\ldots,i_t)}\choose{k-d^{(e)}(i_1,i_2,\ldots,i_t})$ sets of size $k$, which are linearly dependent and which comprise $\{c_{i_1},c_{i_2},\ldots,c_{i_t}\}$. The total number of linearly dependent sets of $k$-columns are 
\begin{align}
\label{dbound}
\nonumber
\mathcal{D}&=\sum_{\forall \{i_1,i_2,\ldots,i_t\}\subseteq [k+1:n]} {{n-d^{(e)}(i_1,i_2,\ldots,i_t)}\choose{k-d^{(e)}(i_1,i_2,\ldots,i_t)}} \\&
=\sum_{\forall \{i_1,i_2,\ldots,i_t\}\subseteq [k+1:n]} {{n-d^{(e)}(i_1,i_2,\ldots,i_t)}\choose{n-k}}.
\end{align}

We have 
\begin{align}
\label{wec5}
\nonumber
d^{(e)}(i_1,i_2,\ldots,i_t)&=d_H(c_{i_1}+c_{i_2}+\ldots+c_{i_t})+t \\&
\nonumber
=d_H(c_{i_1}+c_{i_2}+\ldots+c_{i_t})\\&
\nonumber
~~~+d_H(\text{sum of any t columns of}~\mathbf{I}_{n-k})\\&
\nonumber
=\text{Hamming weight of codeword generated}\\& ~~~~\text{by}~\underbrace{i_1-k,i_2-k,\ldots,i_t-k}_{\in [1:n-k]}~\text{rows of}~\mathbf{H}.
\end{align}


By using \eqref{wec5}, we can write \eqref{dbound} as 
\begin{align}
\label{dbound1}
\nonumber
\mathcal{D}&=\sum_{\forall \{i_1,i_2,\ldots,i_t\}\subseteq [k+1:n]} {{n-d^{(e)}(i_1,i_2,\ldots,i_t)}\choose{n-k}}\\&=\sum_{d=d^*}^{n} A_d{{n-d}\choose{n-k}}=\sum_{d=d^*}^{k} A_d{{n-d}\choose{n-k}},
\end{align}
where $A_d$s are the weight enumerating coefficients of $\mathfrak{C}^\mathsf{T}$. The last equality in \eqref{dbound1} follows from the fact that the codewords with weight from $k+1$ to $n$ in $\mathbf{H}$ do not yield any set of $k$ columns which are linearly dependent.

Now, if $\frac{3d^*}{2} > \text{max}(k,n-k)=k$, we would show that there is no double counting in \eqref{dbound}. From \eqref{wec5}, the minimum value of $d^{(e)}(i_1,i_2,\ldots,i_t)$ for any $\{i_1,i_2,\ldots,i_t\} \subseteq [k+1:n]$ is $d^*$. Consider a codeword of $\mathbf{H}$ with Hamming weight $d_1 \geq d^*$ and having $1$s in $i_1,i_2,\ldots,i_{d_1}$ positions. Hence, the $i_1,i_2,\ldots,i_{d_1}$th columns with any combination of $(k-d_1)$ columns in the remaining $(n-d_1)$ columns of $\mathbf{G}$ are linearly dependent. There exists ${n-d_1}\choose{k-d_1}$ sets of size $k$, which are linearly dependent and which comprise $i_1,i_2,\ldots,i_{d_1}$ columns. Let these ${n-d_1}\choose{k-d_1}$ sets of size $k$ columns be $S_1$. Consider any other codeword of $\mathbf{H}$ with Hamming weight $d_2 \geq d^*$ and having $1$s in If $k_1,k_2,\ldots,k_{d_2}$ positions.  Hence, the $k_1,k_2,\ldots,k_{d_2}$th columns with any combination of $(k-d_2)$ columns in the remaining $(n-d_2)$ columns of $\mathbf{G}$ are linearly dependent. There exists ${n-d_2}\choose{k-d_2}$ sets of size $k$, which are linearly dependent and which comprise $i_1,i_2,\ldots,i_{d_2}$ columns. Let these ${n-d_2}\choose{k-d_2}$ sets of size $k$ columns be $S_2$. We need to show all the $k$ column sets in $S_1$ are distinct to the $k$ columns sets in $S_2$. 

\begin{figure}
\centering
\includegraphics[scale=0.45]{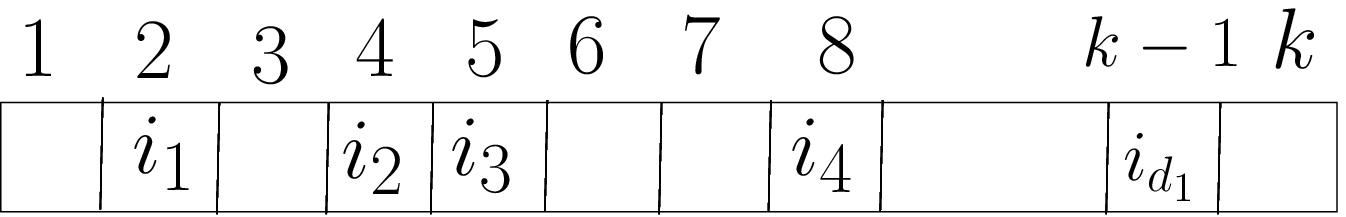}\\
\caption{}
\label{fig1}
\end{figure}

\begin{figure}
\centering
\includegraphics[scale=0.45]{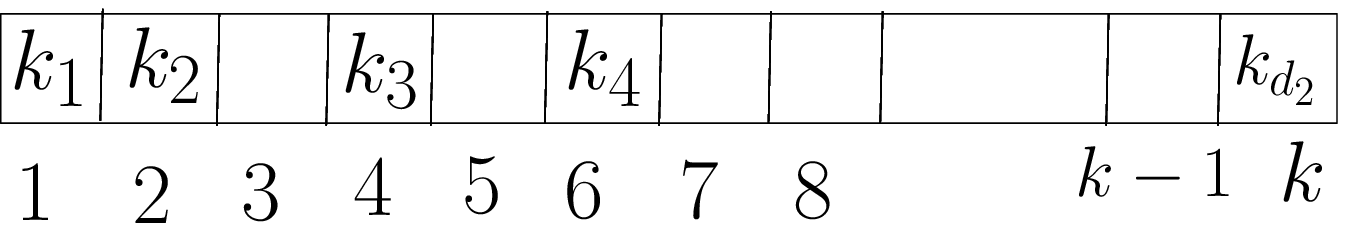}\\
\caption{}
\label{fig2}
\end{figure}

Let there exists a $k$ column set which is common in both $S_1$ and $S_2$. With out loss of generality, we assume that this common set is as shown in Figure \ref{fig1} and Figure \ref{fig2}. Because the $k$ indices in Figure \ref{fig1} and Figure \ref{fig2} are equal, the $k-d_1$ vacant positions in Figure \ref{fig1} should have indices from the set $\{k_1,k_2,\ldots,k_{d_2}\}$  which are not present in the set $\{i_1,i_2,\ldots,i_{d_1}\}$. Similarly, The $k-d_2$ vacant positions in Figure \ref{fig2} should have indices from the set $\{i_1,i_2,\ldots,i_{d_1}\}$  which are not present in the set $\{k_1,k_2,\ldots,k_{d_2}\}$. But, we have $\{i_1,i_2,\ldots,i_{d_1}\}$ and $\{k_1,k_2,\ldots,k_{d_2}\}$ differ in atleast $d^*$ indices ($d^*$ is the minimum Hamming distance of $\mathbf{H}$). Hence, we have
\begin{align*}
(k-d_1)+(k-d_2)\geq d^*.
\end{align*} 
This implies
\begin{align*}
2k\geq d^*+d_1+d_2 \geq 3d^*.
\end{align*}
This is a contradiction because we assumed $\frac{3d^*}{2} > k$. Hence, there exists no $k$ column set which are common in both $S_1$ and $S_2$.

\textbf{Case (ii)} $k < n-k$:- 

In Lemma \ref{lemma2}, we proved that the number of linearly dependent $n-k$ column sets in $\mathbf{H}$ are same as that of number of linearly dependent $k$ column sets in $\mathbf{G}^{(s)}$. Let $\mathbf{H}^{(s)}=[\mathbf{I}_{(n-k) \times (n-k)}~\mathbf{P}_{n-k \times k}^\mathsf{T}]$. That is, $\mathbf{H}^{(s)}$ is obtained by rearranging the columns of $\mathbf{H}$. The number of linearly dependent column sets in $\mathbf{H}$ and $\mathbf{H}^{(s)}$ are same. The matrix $\mathbf{H}^{(s)}$ satisfies the condition given in \textbf{case (i).}  
Hence, for $\mathbf{H}^{(s)}$, we can write \eqref{dbound1} as 
\begin{align}
\label{dbound2}
\mathcal{D}=\sum_{d=1}^{k} A_d{{n-d}\choose{n-k}},
\end{align}
where $A_d$s are the weight enumerating coefficients of $(\mathfrak{C}^\mathsf{T})^\mathsf{T}=\mathfrak{C}$. 

The number of $k$ column sets that are linearly dependent is equal to the number of $k \times k$ singular matrices in $\mathbf{G}$. This completes the proof.
\end{proof}

\begin{theorem}
\label{thm2}
Consider an arbitrary binary full rank matrix $\mathbf{G}$ of size $k \times n~(n \geq k)$. Let $\mathfrak{C}$ be the linear block code generated by $\mathbf{G}$ as generator matrix and $\mathfrak{C}^\mathsf{T}$ be the dual code of $\mathfrak{C}$. Let $\mathcal{I}$ be the number of full rank submatrices of size $k \times k$ in $\mathbf{G}$. If 
$\frac{3d^*}{2} > \text{max}(k,n-k)$, then 
\begin{align*}
\mathcal{I} = {{n}\choose{k}}-\sum_{d=1}^kA_d{{n-d}\choose{n-k}},
\end{align*}
where 
\begin{itemize}
\item $A_d$s are the WECs of $W_{\mathfrak{C}^\mathsf{T}}(x,y)$ if $k \geq n-k$
\item $A_d$s are the WECs of $W_{\mathfrak{C}}(x,y)$ if  $k < n-k$.
\end{itemize}
\end{theorem}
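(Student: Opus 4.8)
The plan is to obtain Theorem \ref{thm2} as an immediate corollary of Theorem \ref{s1thm1}. First I would note that a $k\times n$ matrix has exactly $\binom{n}{k}$ submatrices of size $k\times k$, one for each choice of $k$ of the $n$ columns, and that each such submatrix, being a square matrix over $\mathbb{F}_2$, is either of full rank $k$ or singular. Consequently the two counts partition the total: $\mathcal{I}+\mathcal{D}=\binom{n}{k}$, where $\mathcal{D}$ is the number of singular $k\times k$ submatrices of $\mathbf{G}$ as in Theorem \ref{s1thm1}.

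Next, since the hypothesis $\frac{3d^*}{2}>\text{max}(k,n-k)$ is precisely the hypothesis of Theorem \ref{s1thm1}, I may substitute $\mathcal{D}=\sum_{d=d^*}^{k}A_d\binom{n-d}{n-k}$, where the $A_d$ are the WECs of $\mathfrak{C}^\mathsf{T}$ when $k\geq n-k$ and of $\mathfrak{C}$ when $k<n-k$. In either case $d^*$ is the minimum Hamming weight of the code whose WECs are being used, so $A_d=0$ for every $d$ with $1\leq d<d^*$; hence the lower limit of the sum can be lowered to $d=1$ without changing its value, giving $\mathcal{D}=\sum_{d=1}^{k}A_d\binom{n-d}{n-k}$ with the same case distinction.

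Combining the two observations yields $\mathcal{I}=\binom{n}{k}-\sum_{d=1}^{k}A_d\binom{n-d}{n-k}$, which is exactly the asserted identity, the case distinction on which code supplies the $A_d$ being inherited verbatim from Theorem \ref{s1thm1}. I do not expect any real obstacle here: the only point requiring a word of care is the legitimacy of re-indexing the summation from $d^*$ down to $1$, which is justified by the vanishing of $A_d$ below the minimum distance; all the substantive combinatorial work has already been carried out in the proof of Theorem \ref{s1thm1}.
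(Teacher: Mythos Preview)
Your proposal is correct and follows essentially the same approach as the paper: the paper's proof also observes that $\mathcal{D}+\mathcal{I}=\binom{n}{k}$ and then invokes Theorem~\ref{s1thm1}. Your added remark justifying the change of lower summation limit from $d^*$ to $1$ (via $A_d=0$ for $1\le d<d^*$) is a small clarification the paper leaves implicit, but otherwise the arguments coincide.
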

\begin{proof}
In a binary matrix of size $k \times n$, there exist ${n}\choose{k}$ binary sub-matrices of size $k \times k$. Each of this sub-matrix may be of full rank or singular. Thus, we have 
\begin{align*}
\mathcal{D}+\mathcal{I}={{n}\choose{k}}
\end{align*}
where $\mathcal{D}$ is the number of singular $k \times k$ sub-matrices in $\mathbf{G}$. The remaining proof follows from Theorem \ref{s1thm1}.
\end{proof}
\begin{remark}
To calculate $\mathcal{D}$ and $\mathcal{I}$, we use the weight enumerating coefficients of either $\mathfrak{C}$ or $\mathfrak{C}^{\mathsf{T}}$ that satisfy the condition $$\text{dimension}< \left \lceil \frac{\text{length of the code}}{2} \right \rceil.$$
\end{remark}
\begin{remark}
For a given binary full rank matrix of size $k \times n$, to calculate  $\mathcal{I}$ by using brute-force technique requires the rank computation of $n \choose k$ number of $k \times k$ matrices. Whereas, by using Theorem \ref{thm2}, one only needs to know the weight enumerating coefficients of the given $k \times n$ matrix or its orthogonal complement. The weight enumerating coefficients required in Theorem \ref{thm2} of any arbitrary $k \times n$ matrix needs the computation of Hamming weight of $2^{\text{min}(k,n-k)}$ binary vectors of size $n$.
\end{remark}

For a binary full rank matrix $\mathbf{G}$ of size $k \times n$ satisfying the condition $\frac{3d^*}{2}>\text{max}(k,n-k)$, Algorithm \ref{algo1} finds the number of full rank submatrices of size $k \times k$ in $\mathbf{G}$.  

		\begin{algorithm}[h]
		\caption{Algorithm to find the number of full rank submatrices of size $k \times k$ in a given full rank matrix $\mathbf{G}$ of size $k \times n$ }
			\begin{algorithmic}[2]
			
				\item [Step 1]~~~
				\begin{itemize}
				\item[\footnotesize{1.1:}] Convert the matrix $\mathbf{G}$ in symmetric form by using elementary row operations. Let this symmetric matrix be $\mathbf{G}^{(s)}=[\mathbf{I}_{k \times k}:~\mathbf{P}_{k \times (n-k)}]$ (Lemma \ref{lemma1}).
				\item[\footnotesize{1.2:}] If $k < \lceil \frac{n}{2} \rceil$,  Go to Step 2.
				\item[\footnotesize{1.3:}] Else $\mathbf{G}^{(s)}=[\mathbf{P}_{(n-k) \times k}^{\mathsf{T}}:~\mathbf{I}_{(n-k) \times (n-k)}]$ (Lemma \ref{lemma2}). 
				\end{itemize}

				\item [Step 2]~~~
				\begin{itemize}
				\item[\footnotesize{2.1:}] Find the weight enumerating coefficients of $\mathbf{G}^{(s)}$. Let $A_d$ for $d \in [1:n]$ be the weight enumerating coefficients of $\mathbf{G}^{(s)}$. 
				\item[\footnotesize{2.2:}] $\mathcal{I} = {{n}\choose{k}}-\sum_{d=1}^kA_d{{n-d}\choose{n-k}}$ (Theorem \ref{thm2})
				\end{itemize}
				\item [Step 3] Exit.
		
			\end{algorithmic}
			\label{algo1}
		\end{algorithm}
		
\begin{example}
\label{ex7}
Consider matrix $\mathbf{G}$ given below.
\arraycolsep=0.0pt
\setlength\extrarowheight{-4.0pt}
{
$$\mathbf{G}=\left[\begin{array}{*{20}c}
   1~1~1~0~1~0~0  \\
   1~0~1~1~0~0~1 \\
   1~1~1~1~1~1~1 \\
   0~1~1~0~0~1~1
   \end{array}\right].$$
   } 

For the matrix $\mathbf{G}$, we have $k=4,n=7$. By using the elementary row operations, the matrix $\mathbf{G}$ can be converted into symmetric matrix $\mathbf{G}^{(s)}$. the matrix $\mathbf{G}^{(s)}$ is given below.
\arraycolsep=0.0pt
\setlength\extrarowheight{-4.0pt}
{
$$\mathbf{G}^{(s)}=[\mathbf{I}_{4 \times 4}:~\mathbf{P}_{4 \times 3}]=\left[\begin{array}{*{20}c}
   1~0~0~0~1~1~1  \\
   0~1~0~0~1~1~0 \\
   0~0~1~0~1~0~1 \\
   0~0~0~1~0~1~1

   \end{array}\right].$$
   } 

For the matrix $\mathbf{G}^{(s)}$, we have $k \geq n-k$. Hence, we need to calculate $\mathcal{D}$ and $\mathcal{I}$ from weight enumerating coefficients of $\mathfrak{C}^{\mathsf{T}}$. The generator matrix of $\mathfrak{C}^{\mathsf{T}}$ is given below.
\arraycolsep=0.0pt
\setlength\extrarowheight{-4.0pt}
{
$$[\mathbf{P}^{\mathsf{T}}_{3 \times 4}:~\mathbf{I}_{3 \times 3}]=\left[\begin{array}{*{20}c}
   1~1~1~0~1~0~0 \\
   1~1~0~1~0~1~0 \\
   1~0~1~1~0~0~1 
   \end{array}\right].$$
   }

The weight enumerating function of $\mathfrak{C}^{\mathsf{T}}$ can be calculated by enumerating all the $2^3=8$ codewords generated by above $3 \times 7$ matrix. The weight enumerating function is given by 
\begin{align}
\label{ex2wef}
\mathcal{W}_{\mathfrak{C}^{\mathsf{T}}}(x,y)=x^7+7x^4y^4.
\end{align}

From \eqref{ex2wef}, we have $d^*=4,A_4=7$ and $A_d=0$ for $d \neq 4$. The $k$ and $d^*$ in this example satisfy the condition $\frac{3d^*}{2}>\text{max}(k,n-k)$. From Theorem \ref{s1thm1}, we have 
\begin{align*}
\mathcal{D}=\sum_{d=1}^{k} A_d{{n-d}\choose{n-k}}=A_4 {n-4 \choose n-k}=7{4 \choose 4}=7.
\end{align*}
From Theorem \ref{thm2}, we have
$\mathcal{I}={n \choose k}-\mathcal{D}=35-7=28.$

The 4-column sets corresponding to the $7$ singular $4 \times 4$ matrices are given below.
\begin{align*}
&\mathcal{D}_1=\{1,2,3,5\},~~~\mathcal{D}_2=\{1,2,4,6\},~~~\mathcal{D}_3=\{1,3,4,7\}\\&\mathcal{D}_4=\{3,4,5,6\},~~~\mathcal{D}_5=\{2,3,6,7\},
~~~\mathcal{D}_6=\{2,4,5,7\}\\&\mathcal{D}_{7}=\{1,5,6,7\}.
\end{align*}

The 4-column sets corresponding to the $28$ full-rank $4 \times 4$ matrices are given below.

\begin{align*}
&\mathcal{I}_1=\{1,2,3,4\},~~~\mathcal{I}_2=\{1,2,3,5\},~~~\mathcal{I}_3=\{1,2,3,7\}\\&\mathcal{I}_4=\{1,2,4,6\},~~~\mathcal{I}_5=\{1,2,4,7\},
~~~\mathcal{I}_6=\{1,2,5,6\}\\&\mathcal{I}_7=\{1,2,5,7\},~~~\mathcal{I}_8=\{1,2,6,7\},~~~\mathcal{I}_9=\{1,3,4,5\},\\&\mathcal{I}_{10}=\{1,3,4,6\},~~\mathcal{I}_{11}=\{1,3,5,6\},~~\mathcal{I}_{12}=\{1,3,5,7\},\\&
\mathcal{I}_{13}=\{1,3,6,7\},~~\mathcal{I}_{14}=\{1,4,5,6\},~~\mathcal{I}_{15}=\{1,4,5,7\},\\&\mathcal{I}_{16}=\{1,4,6,7\},~~\mathcal{I}_{17}=\{2,3,4,5\},~~\mathcal{I}_{18}=\{2,3,4,6\},\\&\mathcal{I}_{19}=\{2,3,4,7\},~~\mathcal{I}_{20}=\{2,3,5,6\},~~
\mathcal{I}_{21}=\{2,3,6,7\},\\&\mathcal{I}_{22}=\{2,4,5,6\},~~\mathcal{I}_{23}=\{2,4,5,7\},~~\mathcal{I}_{24}=\{2,5,6,7\},\\&\mathcal{I}_{25}=\{3,4,5,7\} ,~~\mathcal{I}_{26}=\{3,4,6,7\},~~\mathcal{I}_{27}=\{3,5,6,7\}\\&\mathcal{I}_{28}=\{4,5,6,7\}.
\end{align*}

\end{example}
\begin{note}
The matrix $\mathbf{G}$ given in Example \ref{ex7} is a generator matrix of $(7,4)$ Hamming code. Hence, in the generator matrix of $(7,4)$ Hamming code, there exist 28 full rank submatrices of size $4 \times 4$. From Lemma \ref{lemma2}, in the parity check matrix of $(7,4)$ Hamming code, there exist 28 full rank submatrices of size $3 \times 3$. 
\end{note}

\begin{example}
\label{ex8}
Consider matrix $\mathbf{G}$ given below.
\arraycolsep=0.0pt
\setlength\extrarowheight{-4.0pt}
{
$$\mathbf{G}=\left[\begin{array}{*{20}c}
   1~1~1~1~1~1~0~0~1~0 \\
   1~0~1~0~0~0~0~0~0~0  \\
   0~1~0~0~0~0~0~0~1~1 \\
   0~0~1~0~0~0~0~1~1~0 \\
   0~0~0~0~1~0~0~0~1~0 \\
   1~0~1~0~0~0~1~1~0~1 \\
   0~0~0~1~0~0~0~1~0~1  \\

   \end{array}\right].$$
   } 

For the matrix $\mathbf{G}$, we have $k=7,n=10$ and the matrix $\mathbf{G}^{(s)}$ is given below.
\arraycolsep=0.0pt
\setlength\extrarowheight{-4.0pt}
{
$$\mathbf{G}^{(s)}=[\mathbf{I}_{7 \times 7}:~\mathbf{P}_{7 \times 3}]=\left[\begin{array}{*{20}c}
   1~0~0~0~0~0~0~1~1~0  \\
   0~1~0~0~0~0~0~0~1~1 \\
   0~0~1~0~0~0~0~1~1~1 \\
   0~0~0~1~0~0~0~1~0~1  \\
   0~0~0~0~1~0~0~0~1~0 \\
   0~0~0~0~0~1~0~1~1~1 \\
   0~0~0~0~0~0~1~1~0~1

   \end{array}\right].$$
   } 

For the matrix $\mathbf{G}^{(s)}$, we have $k \geq n-k$. Hence, we need to calculate $\mathcal{D}$ and $\mathcal{I}$ from weight enumerating coefficients of $\mathfrak{C}^{\mathsf{T}}$. The generator matrix of $\mathfrak{C}^{\mathsf{T}}$ is given below.
\arraycolsep=0.0pt
\setlength\extrarowheight{-4.0pt}
{
$$[\mathbf{P}^{\mathsf{T}}_{3 \times 7}:~\mathbf{I}_{3 \times 3}]=\left[\begin{array}{*{20}c}
   1~0~1~1~0~1~1~1~0~0 \\
   1~1~1~0~1~1~0~0~1~0 \\
   0~1~1~1~0~1~1~0~0~1 
   \end{array}\right].$$
   } 

The weight enumerating function of $\mathfrak{C}^{\mathsf{T}}$ can be calculated by enumerating all the $2^3=8$ codewords generated by above $3 \times 10$ matrix. The weight enumerating function is given by 
\begin{align}
\label{ex21wef}
\mathcal{W}_{\mathfrak{C}^{\mathsf{T}}}(x,y)=x^{10}+7x^4y^6.
\end{align}

From \eqref{ex21wef}, we have $d^*=6,A_6=7$ and $A_d=0$ for $d \neq 6$. The $k$ and $d^*$ in this example satisfy the condition $\frac{3d^*}{2}>\text{max}(k,n-k)$. From Theorem \ref{s1thm1}, we have 
\begin{align*}
\mathcal{D}&=\sum_{d=1}^{k} A_d{{n-d}\choose{n-k}}\\&=A_6 {n-6 \choose n-k}=6{4 \choose 3}=28.
\end{align*}
From Theorem \ref{thm2}, we have
\begin{align*}
\mathcal{I}={n \choose k}-\mathcal{D}=120-28=92.
\end{align*}
\end{example}
\begin{example}
Consider a generator matrix $\mathbf{G}$ of $(n=15,k=11)$ Hamming code. The weight enumerating function of the dual of the $(15,11)$ Hamming code is 
\begin{align*}
\mathcal{W}_{\mathfrak{C}^{\mathsf{T}}}(x,y)=x^{15}+15x^{7}y^{8}.
\end{align*}
In this example, we have $k \geq n-k$ and hence $d^*$ is the minimum Hamming weight of the error correcting code generated by $\mathbf{H}$ and $d^*=8$. The $k$ and $d^*$ in this example satisfy the condition $\frac{3d^*}{2}>\text{max}(k,n-k)$. From Theorem \ref{s1thm1}, we have 
\begin{align*}
\mathcal{D}&=\sum_{d=1}^{k} A_d{{n-d}\choose{n-k}}\\&=A_{8} {n-8 \choose n-k}=15{7 \choose 4}=525.
\end{align*}
From Theorem \ref{thm2}, we have
\begin{align*}
\mathcal{I}={15 \choose 11}-\mathcal{D}=1,365-525=840.
\end{align*}
\end{example}
\begin{note}
In a generator matrix of $(15,11)$ Hamming code, there exist 840 full rank submatrices of size $11 \times 11$. From Lemma \ref{lemma2}, in the parity check matrix of $(15,11)$ Hamming code, there exist 840 full rank submatrices of size $4 \times 4$. 
\end{note}

\section{scope for future work}
\label{sec5}
In this paper, for a $k \times n$ binary full rank matrix satisfying a condition of minimum Hamming weight, we quantified the number of binary full rank submatrices of size $k \times k$  in terms of weight enumerating function. 
Some of the interesting problems are given below.
\begin{itemize}
\item Establishing the relation between the number of full rank submatrices of size $k \times k$ for any arbitrary full rank binary matrix of size $k \times n$ is an interesting problem.
\item For given positive integers $k$ and $n$ ($n > k$), finding the maximum value of $\mathcal{I}$ and the corresponding matrix $\mathbf{G}$ is an interesting problem.
\item Extending the result given in this paper to other fields is an interesting problem.
\end{itemize}
\section*{Acknowledgement}
This work was supported partly by the Science and Engineering Research Board (SERB) of Department of Science and Technology (DST), Government of India, through J.C. Bose National Fellowship to B. Sundar Rajan.


\begin{thebibliography}{9}
\bibitem{NC1}
R. Ahlswede, N. Cai, R. Li, and R. Yeung, “Network information flow,” in proc.  \textit{IEEE Trans. Inf. Theory}, vol. 46, no. 4, pp. 1204–1216, Jul. 2000.
\bibitem{NC2}
N. Cai, S.-Y. R. Li, and R. W. Yeung, “Linear network coding,” in proc. \textit{IEEE Trans. Inf. Theory}, vol. 49, no. 2, pp. 371–381, Feb. 2003.
\bibitem{NC3}
R. Koetter and M. Medard, “An algebraic approach to network coding,” in proc. \textit{IEEE/ACM Trans. Netw.}, vol. 11, no. 5, pp. 782–795, Oct. 2003.
\bibitem{NC4}
S. Jaggi et al., “Polynomial time algorithms for multicast network code construction,” in proc. \textit{IEEE Trans. Inf. Theory}, vol. 51, no. 6, pp. 1973–1982, Jun. 2005.
\bibitem{ISCO}
Y. Birk and T. Kol, ``Informed-source coding-on-demand (ISCOD) over broadcast channels", in \textit{Proc. IEEE Conf. Comput. Commun.}, San Francisco, CA, 1998, pp. 1257-1264.
\bibitem{RY}
R. Yeung, S. Y. Li and  N Cai, ``Network Coding Theory", Now Publishers Inc 2006.
\bibitem{ECC2}
F. J. MacWilliams, ``A theorem on the distribution of weights in a systematic codes", Bell Syst. Tech. J. 42: 79-94, 1963.

\bibitem{ECC}
F. J. MacWilliams and N. J. Sloane,  ``The theory of error correcting codes", Elsevier, 1977.
\end{thebibliography}
\end{document}